\begin{document}
\title{Quality-Aware Coding and Relaying for 60\,GHz Real-Time Wireless Video Broadcasting}
\author{\IEEEauthorblockN{Joongheon Kim$^\dag$, \textit{Member, IEEE}, Yafei Tian$^\natural$, \textit{Member, IEEE}, Stefan Mangold$^\S$, \textit{Member, IEEE}, and\\Andreas F. Molisch$^{\ddag}$, \textit{Fellow, IEEE}}
\IEEEauthorblockA{$^{\dag\ddag}$Communication Sciences Institute, University of Southern California, Los Angeles, CA 90089, USA\\
$^\natural$School of Electronics and Information Engineering, Beihang University, Beijing 100191, China\\
$^\S$Disney Research, 8092 Zurich, Switzerland\\
Emails: $^\dag$joongheon.kim@usc.edu, $^\natural$ytian@buaa.edu.cn, $^\S$stefan@disneyresearch.com, $^{\ddag}$molisch@usc.edu}
}

\maketitle

\begin{abstract}
Wireless streaming of high-definition video is a promising application for 60\,GHz links, since multi-Gigabit/s data rates are possible.
In particular we consider a sports stadium broadcasting system where video signals from multiple cameras are transmitted to a central location.
Due to the high pathloss of 60\,GHz radiation over the large distances encountered in this setting, the use of relays is required.
This paper designs a quality-aware coding and relaying algorithm for maximization of the overall video quality.
We consider the setting that the source can split its data stream into parallel streams, which can be transmitted via different relays to the destination.
For this, we derive the related formulation and re-formulate it as convex programming, which can guarantee optimal solutions.
\end{abstract}

\IEEEpeerreviewmaketitle

\section{Introduction}\label{sec:intro}
Wireless video streaming in the millimeter-wave range has received a lot of attention in both the academic and industrial communities. In particular the 60\,GHz frequency range is of great interest: around 7\,GHz bandwidth (58-65\,GHz) has been made available, which enables multi-Gbit/s high-definition video streaming in an uncompressed, or less compressed, manner.
Therefore, two industry consortia, i.e.,
	WirelessHD and Wireless Gigabit Alliance (WiGig), have developed related specifications; there are also two activities within the IEEE, namely IEEE 802.15.3c~\cite{153c} and IEEE 802.11ad~\cite{11ad}.

In this paper, we design and analyze such a 60\,GHz video transmission system for outdoor applications, in particular in a sports stadium.
In this system, there are multiple wireless video cameras in a stadium for high-quality real-time broadcasting, all send their signals to a broadcasting center.
To transmit uncompressed HD video streams in real-time, a data rate of around $1.5$ Gbit/s is required~\cite{kim11pimrc}.
Since the distance between wireless cameras and a broadcasting center is on the order of several hundred meters, the high pathloss at 60\,GHz is one of key challenges that limits communication ranges.
One promising approach to deal with this problem is using \textit{relays} for extending the coverage~\cite{kim11pimrc}.
Additionally, we take the complexity of the antennas into account. In order to compensate for the high pathloss, as well as to reduce interference, high-gain antennas need to be employed. We also consider the situation where the antenna at the camera (video source) can form multiple beams, so that it can split its data stream into multiple streams and send them to the destination via parallel links.
By introducing multiple beams in each relay, our framework operates even though the number of relays is smaller than the number of sources.
In this case, appropriate compression and routing of multiple streams via the same relay can be used.
Relaying for sum rate maximization has been analyzed in many papers. However, for video streaming, we are more interested in {\em video quality}. For this, the proposed quality-aware formulation selects the relays and decides the coding rates for every single video stream. With this formulation, optimal solutions are obtained by convex optimization techniques.

Thus, the contribution of the proposed scheme is achieving \textit{joint rate and relay selection} with \textit{video quality} consideration and an \textit{interference-free} operation. This combination of special features makes it different from other schemes.

\begin{figure*}[t!]
	\centering
	\subfigure[Wireless Video Camera (Source)]{
		\includegraphics[scale=0.23]{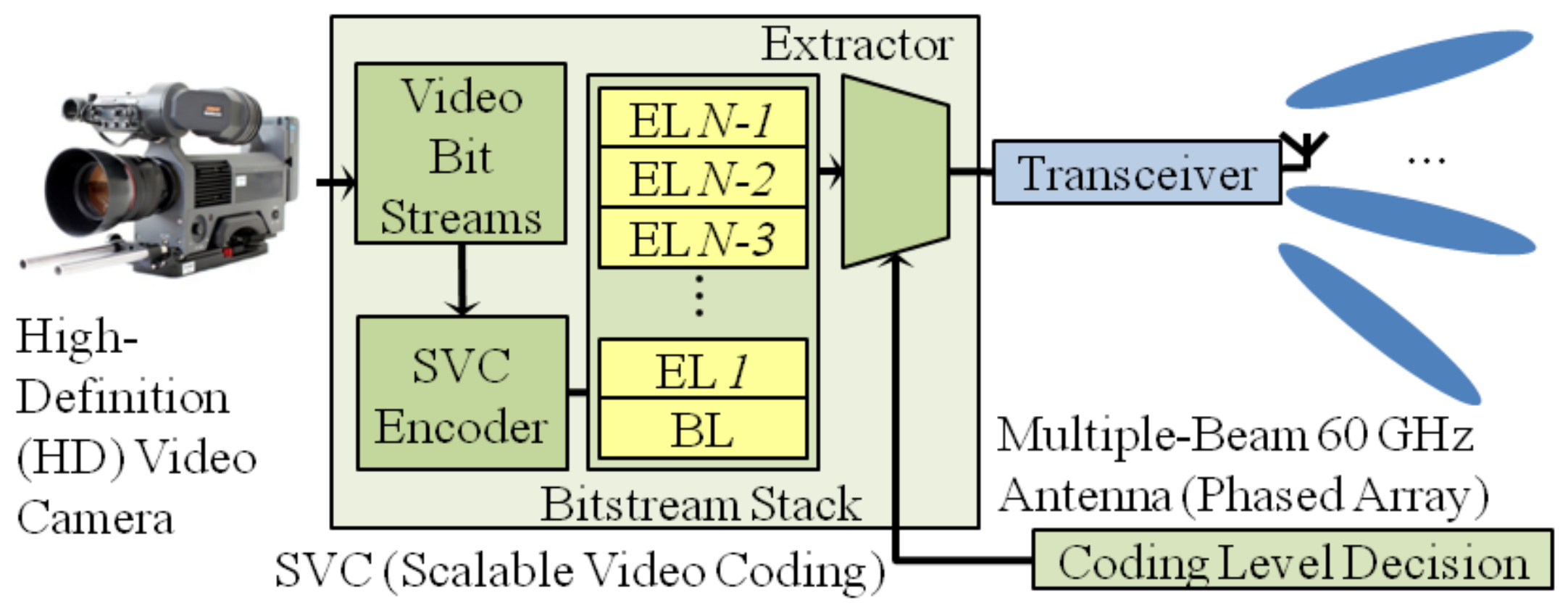}
		\label{fig:vcs}
	}
	\subfigure[Relay]{
		\includegraphics[scale=0.23]{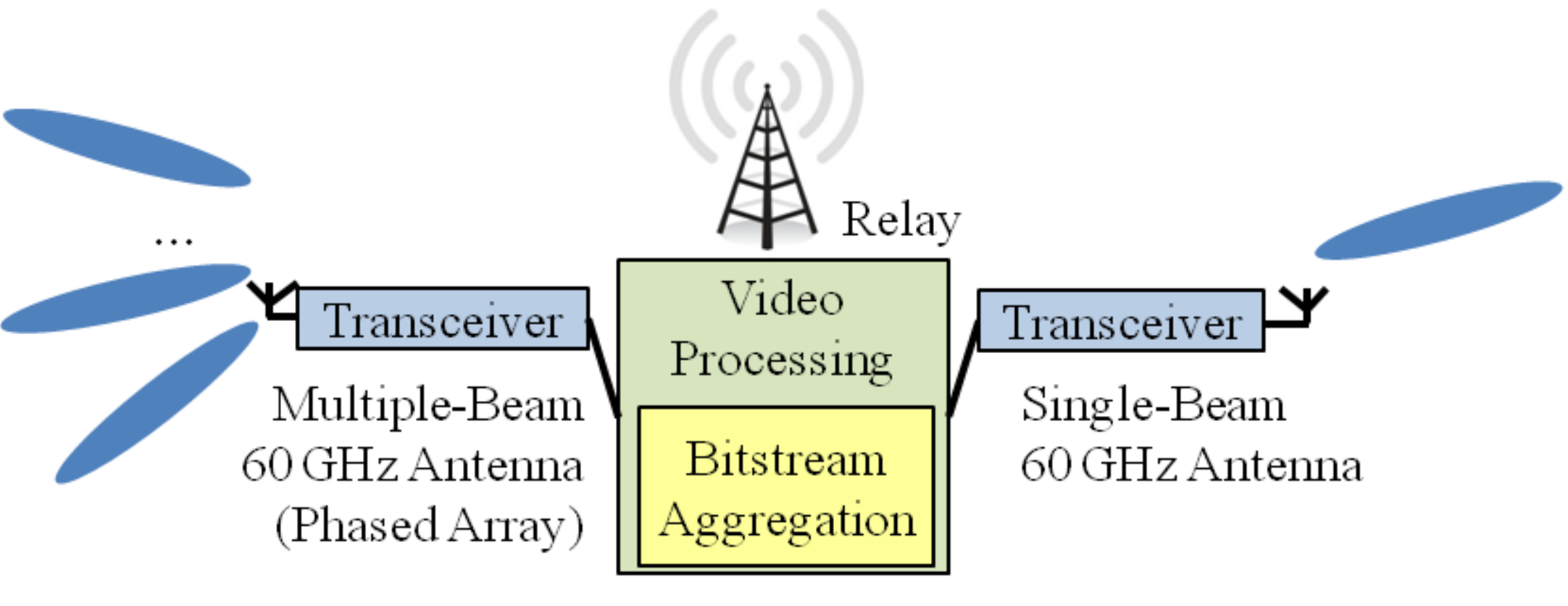}
		\label{fig:relay}
	}
	\subfigure[Broadcasting Center]{
		\includegraphics[scale=0.23]{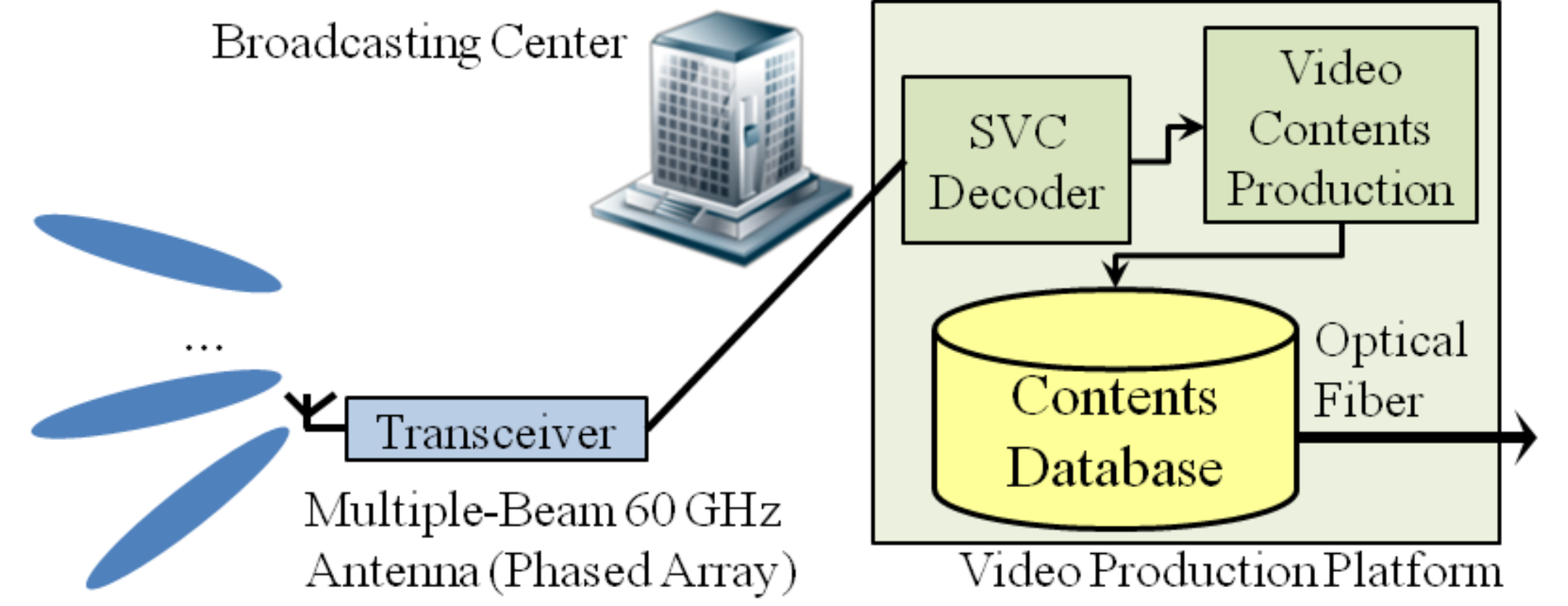}
		\label{fig:bc}
	}
	\caption{System Components (Camera (a), Relay (b), Broadcasting Center (c))}
	\label{fig:sysmodule}
\end{figure*}

\begin{figure}[t!]
	\begin{center}
		\includegraphics[scale=0.33]{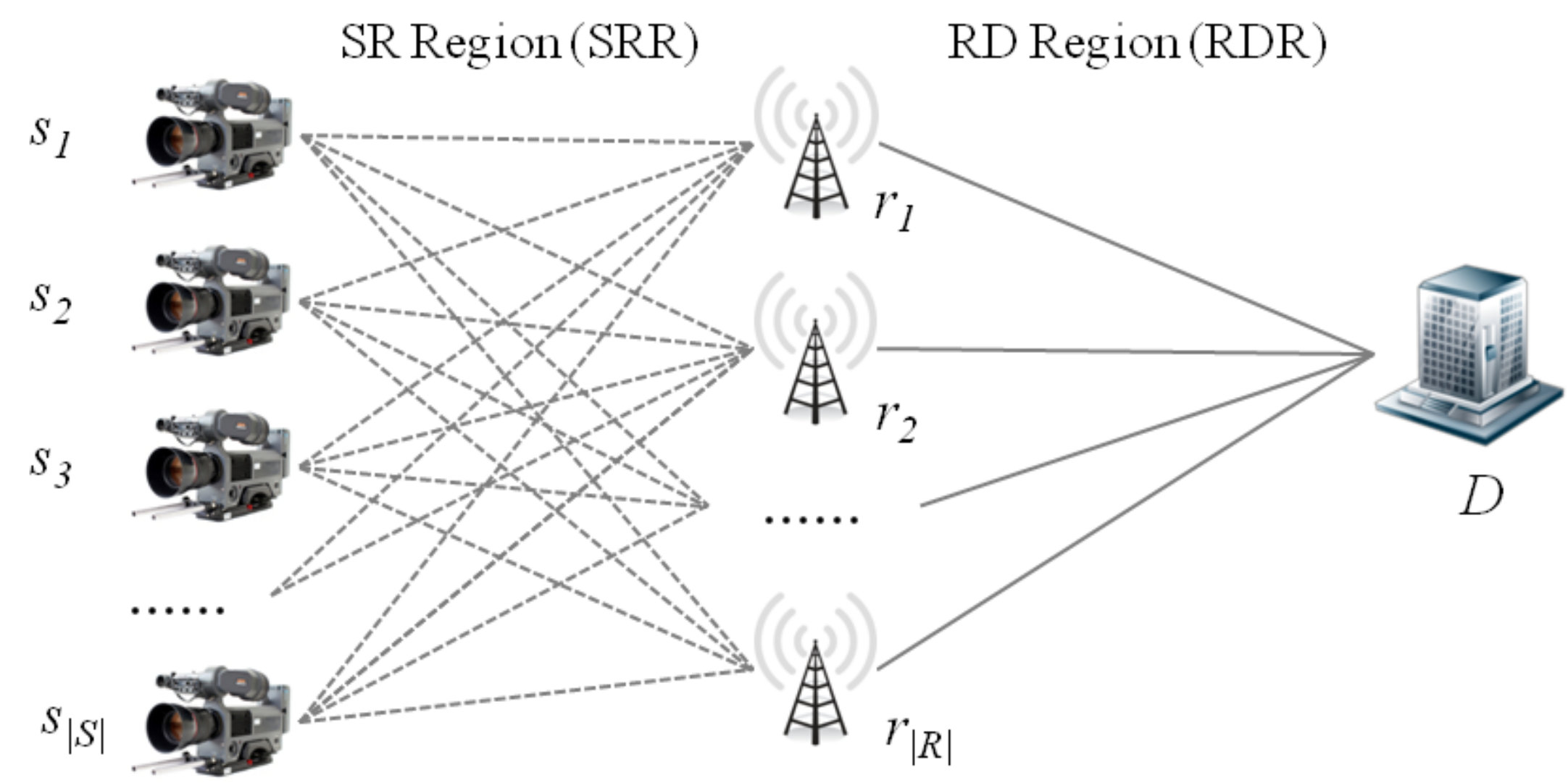}
	\end{center}
	\caption{Overall Architecture}
	\label{fig:sysmodel}
\end{figure}	

\section{Related Work}\label{sec:related}

There are two salient factors in our broadcasting setup: (i) stream splitting via the multiple-beam antennas, and (ii) rate control for video quality maximization. In the following, we outline why these aspects make the setup different from other scenarios that have been treated in the literature.
There is, of course, a huge number of papers (too numerous to reference here) dealing with the topic of routing (relay selection) and sum rate maximization in multi-node networks with multi-beam relays~\cite{yilmaz10icc} and with multi-beam sources and relays~\cite{liu12jsac}. However these papers do not consider the control of the video coding rate (compression), and are thus not directly applicable to our scenario.
For video networks, example publications include~\cite{videorelay1, videorelay2, videorelay3}. The scheme in~\cite{videorelay1} is for video streaming over IEEE 802.11 networks. The proposed scheme is efficient for the multi-hop networks it investigates; however, it does not consider the video stream splitting via the multi-beam antennas and route selection.
Ref.~\cite{videorelay2} considers video streaming in multi-hop networks. It considers networks similar to ours (when specialized to the two-hop case), but again does not investigate multi-beam antennas and the splitting of the data streams. The formulation in~\cite{videorelay3} considers path selection for video streaming in MANET. It concentrates on the consideration of interference, a factor that does not play a role in our 60 GHz channel, where the high directionality of the links prevents inter-stream interference. None of these papers consider the control of the coding rate (compression).
In previous research on video streaming, schemes usually considered multipath transmission to combat the limited bandwidth~\cite{videorelay2}\cite{videorelay3}. Also, some of the research considered retransmission of frames and tried to reduce transmission time~\cite{videorelay1}. However, thanks to the extremely large bandwidth at 60 GHz, these factors are not longer critical in our system.
The representative work which considers both rate control and routing appeared in~\cite{jsac12thou}: However, the relays cannot aggregate streams, which is required when the number of relays is smaller than the number of flows. In addition, the proposed framework does not consider the properties of video. In our previous work~\cite{kim11pimrc}, we considered the properties of 60 GHz channel, rate control, and video quality, but we restricted ourselves to the cases that the number of relays exceeds the number of sources (i.e., no consideration for multi-beam antennas) and the numbers of sources and destinations are identical.
We finally note that wireless video for sports stadiums~\cite{disney_wowmom}; however the fundamental setup differs from ours in that~\cite{disney_wowmom} considers content distribution to wireless devices of the audience in the stadium, while our system is for real-time streaming to a broadcasting center in the stadium.

\section{A Reference System Architecture}\label{sec:refsysmodel}

\subsection{Link Budget Analysis}\label{sec:refsysmodel-lba}
Shannon's equation for the capacity is used for the data rate:
\begin{equation}
C = B\cdot\log_{2}\left(1+\text{SNR}\right)\label{eq:shannon}
\end{equation}
where $\text{SNR}$ is equal to $P_{\text{signal}}/P_{\text{noise}}$ on a linear scale, $P_{\text{signal}}$ and $P_{\text{noise}}$ stand for the signal power and noise power, and $B$ stands for bandwidth ($2.16$\,GHz in WiGig~\cite{11ad})~\cite{molisch11book}.
The signal power expressed in dB, $P_{\text{signal, dB}}$, is obtained as:
\begin{equation}
P_{\text{signal,dB}} = E + G_{r} - W -O(d) + F(d)
\end{equation}
where $E$ denotes the EIRP (equivalent isotropically radiated power), which is limited to $40$\,dBm in the USA and $57$\,dBm in Europe.
$G_{r}$ means the receiver antenna gain and is set to $40$\,dB, which corresponds to high-gain 60\,GHz scalar horn antennas~\cite{comotech}, which we propose to achieve long range.
Shadowing can be either temporally variant (due to people walking close to LOS), or time-invariant (due to objects that are (partly) shadowing off the LOS. While the shadowing variances envisioned for our deployment scenarios are on the order of a few dB, we use a $10$\,dB shadowing margin to provide high link reliability.
$F(d)$ is the mean pathloss, which depends on the distance $d$ between transmitter and receiver
\begin{equation}
F(d) = 10\log_{10}\left\{\lambda/(4\pi d)\right\}^{n}
\end{equation}
where the pathloss coefficient $n$ is set to $2.5$~\cite{60ghz} and the wavelength ($\lambda$) is
$5$\,millimeter at $60$\,GHz.
$O(d)$ denotes the oxygen attenuation, which can be computed as $O(d) = \frac{15}{1000}d$ when $d > 200$\,m. Otherwise, it is ignored~\cite{60ghz}.
The noise power in dB, $P_{\text{noise,dB}}$ is computed as:
\begin{equation}
P_{\text{noise,dB}} = 10\log_{10}\left(k_{B}T_{e}\cdot B\right) + F_{N}
\end{equation}
where $k_{B}T_{e}$ stands for the noise power spectral density ($-174$\,dBm/Hz) and $F_{N}$ is the noise figure of the receiver ($6$\,dB).
By combining the above equations, approximately $200 - 300$\,m is the maximum distance for obtaining $1.5$ GBit/s data rate, i.e., successful uncompressed video transmission.


\subsection{60\,GHz Outdoor Broadcasting Systems}\label{sec:refsysmodel-model}

As shown in Sec.~\ref{sec:refsysmodel-lba}, the assistance of relays is required if the distance between wireless cameras and a broadcasting center is more than $200 - 300$\,m.
Furthermore, the size of a sports stadium (from wireless cameras to a broadcasting center) is generally not more than $500$\,m. Thus, we restrict the number of relays to one.
In our 60\,GHz broadcasting system, three components are existing, i.e., wireless video cameras, relays, and a broadcasting center.
As presented in Fig.~\ref{fig:vcs}, the proposed wireless video cameras have scalable video coding (SVC) functionalities that reproduce the recorded video signals as layered SVC-coded bitstreams.
If the achievable rate of a 60\,GHz link is sufficient for uncompressed video streaming (i.e., more than $1.5$\,Gbit/s), all layers can be transmitted.
Otherwise, the optimal coding level decision module has to determine the number of layers.
Each wireless video camera has multiple-beam antennas.
Therefore, each antenna can form $N$ independent beams, so that the multiple streams created by SVC-encoding are divided into $N$ parts and each part is assigned to a beam to be concurrently transmitted.
We furthermore assume that the relays have multiple-beam antennas for reception, see
Fig.~\ref{fig:relay}. The relays aggregate the received signals and transmit them towards a broadcasting center. As presented in Fig.~\ref{fig:bc}, the proposed broadcasting center has multiple antennas which are facing the relays. We emphasize that due to the narrow beamwidth ($1.5^{\circ}$-$10^{\circ}$~\cite{comotech}) of the antennas, multiple streams arriving at the broadcasting center or relays do not interfere with each other.

\subsection{Objective}\label{sec:objective}
For this given system, our objective is the maximization of the delivered total video quality.
As shown in~\cite{kim12tbc}, the quality of video is related to the data rate in a nonlinear fashion as a sublinearly, but monotonically, increasing form.
Following is one example of a quality function:
\begin{equation}
f_{q}(a) = \frac{1}{\log_{\beta}(a_{\text{max}}+1)} \log_{\beta}(a+1)
\end{equation}
$\beta$ is a base ($1<\beta$), $a_{\text{max}}$ is a desired data rate for uncompressed video streaming, and $a$ is a given data rate.


\section{Mathematical Optimization Formulation}\label{sec:jscr}

Fig.~\ref{fig:sysmodel} shows the reference model with a set of sources $\mathcal{S}$, a set of relays $\mathcal{R}$, and a single destination $D$.
In the relay-destination region (RDR) of Fig.~\ref{fig:sysmodel}, all relays are connected to $D$.
Then the maximum achievable rates of all relay and destination pairs can be computed (i.e., $a_{r_{1}\rightarrow D}^{\text{RDR}},\cdots, a_{r_{|\mathcal{R}|}\rightarrow D}^{\text{RDR}}$). Our assumption is that $D$ can form a sufficient number of independent beams so that it has no limitations concerning the number of relays.
Thus, we wish to find optimal combinations between sources and relays in source-relay region (SRR) for the settings that both sources and relays can form multiple beams.
Then, our formulation for the maximization of delivered total video quality is as follows:
\begin{equation}
\max \sum_{j=1}^{|\mathcal{R}|}\sum_{i=1}^{|\mathcal{S}|}f_{q}\left(\frac{1}{2}a_{s_{i}\rightarrow r_{j}}^{\text{SRR}}\right)x_{s_{i}\rightarrow r_{j}}^{\text{SRR}}\label{eq:sss-objfunc}
\end{equation}
subject to
\begin{eqnarray}
\sum_{i=1}^{|\mathcal{S}|}a_{s_{i}\rightarrow r_{j}}^{\text{SRR}}x_{s_{i}\rightarrow r_{j}}^{\text{SRR}} &\leq& \mathcal{A}_{r_{j}\rightarrow D}^{\text{RDR}}, \forall j, \label{eq:sss-relay-capacity}\\
\sum_{i=1}^{|\mathcal{S}|}x_{s_{i}\rightarrow r_{j}}^{\text{SRR}} &\leq& B_{r_{j}}, \forall j, \label{eq:sss-relay-flow-const}\\
\sum_{j=1}^{|\mathcal{R}|}x_{s_{i}\rightarrow r_{j}}^{\text{SRR}} &\leq& B_{s_{i}}, \forall i, \label{eq:sss-source-flow-const}\\
\underline{a}_{s_{i}} &\leq& \sum_{j=1}^{|\mathcal{R}|} a_{s_{i}\rightarrow r_{j}}^{\text{SRR}}x_{s_{i}\rightarrow r_{j}}^{\text{SRR}}, \forall i,\label{eq:sss-a-add}\\
a_{s_{i}\rightarrow r_{j}}^{\text{SRR}} &\leq& \mathcal{A}_{s_{i}\rightarrow r_{j}}^{\text{SRR}}, \forall i, \forall j,\label{eq:sss-a-leq} \\
x_{s_{i}\rightarrow r_{j}}^{\text{SRR}} &\in& \{0,1\}, \forall i, \forall j,\label{eq:sss-x-01}
\end{eqnarray}
where $a_{s_{i}\rightarrow r_{j}}^{\text{SRR}}$ and $x_{s_{i}\rightarrow r_{j}}^{\text{SRR}}$ stand for the Data rate between $s_{i}$ and $r_{j}$ and Boolean connectivity index between $s_{i}$ and $r_{j}$, respectively. Note that $s_{i}$ and $r_{j}$ stand for the source $i$, $\forall i\in \{1,\cdots,|\mathcal{S}|\}$ and the relay $j$, $\forall j\in \{1,\cdots,|\mathcal{R}|\}$, respectively.
If $s_{i}$ and $r_{j}$ are connected, $x_{s_{i}\rightarrow r_{j}}^{\text{SRR}}$ is $1$ by (\ref{eq:sss-x-01}). Otherwise, $x_{s_{i}\rightarrow r_{j}}^{\text{SRR}}$ is $0$ by (\ref{eq:sss-x-01}).
The relays are all connected to $D$, thus, $x_{r_{j}\rightarrow D}^{\text{RDR}} = 1$.
The $\mathcal{A}_{s_{i}\rightarrow r_{j}}^{\text{SRR}}$ and $\mathcal{A}_{r_{j}\rightarrow D}^{\text{RDR}}$ are maximum achievable rates computed by (\ref{eq:shannon}).
In addition, the desired data rates between $s_{i}$ and $r_{j}$ are less than $\mathcal{A}_{s_{i}\rightarrow r_{j}}^{\text{SRR}}$ as shown in (\ref{eq:sss-a-leq}) where $f_{q}\left(\cdot\right)$ is a function for the relationship between video quality and data rate (logarithmically and monotonically increasing form).
As shown in (\ref{eq:sss-a-add}), the desired data rates between $s_{i}$ and $r_{j}$ should exceed the defined minimum rates ($\underline{a}_{s_{i}}$, $\forall s_{i}$), which are minimum data rates for guaranteeing the required minimum video qualities for each flow.
Here, $\mathcal{A}_{s_{i}\rightarrow r_{j}}^{\text{SRR}}$ from $s_{i}$ to $r_{j}$ and $\mathcal{A}_{r_{j}\rightarrow D}^{\text{RDR}}$ from $r_{j}$ to $D$ are fixed.
For each individual source, there are multiple outgoing flows (multiple beams) toward relays, as formulated in (\ref{eq:sss-source-flow-const}) where $B_{s_{i}}$ stands for the number of antenna-beams at source $i, \forall i\in\{1,\cdots,|\mathcal{S}|\}$.
Similarly, each relay can form multiple beams in receiving mode, thus the number of incoming flows from sources can be $B_{r_{j}}$ as formulated in (\ref{eq:sss-relay-flow-const}) where it means the number of antenna-beams at relay $j, \forall j\in\{1,\cdots,|\mathcal{R}|\}$.
In (\ref{eq:sss-relay-capacity}), for each relay, the summation of incoming rates from sources cannot exceed the data rate between the relay and $D$.
Finally, (\ref{eq:sss-objfunc}) describes the objective of finding the pairs between sources and relays as well as finding the corresponding data rates for maximizing the total video quality and the data rate value becomes $1/2$ due to the half-duplex constraint.

\newtheorem{thm}{Theorem}
\begin{thm}
The formulation in Section~\ref{sec:jscr} is non-convex.
\end{thm}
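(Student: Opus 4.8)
The plan is to test the formulation of Section~\ref{sec:jscr} against the very definition of a convex program: a maximization is convex only when a \emph{concave} objective is optimized over a \emph{convex} feasible set. I would show that the problem violates both requirements, and---crucially---that it keeps violating them even after the integer variables are relaxed, so that the non-convexity is structural and not a mere artifact of the Boolean constraint.

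First I would treat the feasible set. The connectivity variables are pinned by (\ref{eq:sss-x-01}) to the set $\{0,1\}$, which is not convex since $0,1\in\{0,1\}$ but $\tfrac12(0)+\tfrac12(1)=\tfrac12\notin\{0,1\}$. To lift this to the feasible region itself, I would exhibit two distinct feasible routings that disagree in at least one index $x_{s_i\rightarrow r_j}^{\text{SRR}}$ (such a pair exists for any instance admitting more than one assignment) and observe that their midpoint takes a fractional value there, hence violates (\ref{eq:sss-x-01}). This alone shows the feasible set is non-convex, so the program is non-convex.

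Anticipating the objection that (\ref{eq:sss-x-01}) could simply be relaxed to $x_{s_i\rightarrow r_j}^{\text{SRR}}\in[0,1]$, the second and sharper step is to show that even the relaxation is non-convex. Here I would work directly with the objective (\ref{eq:sss-objfunc}), each summand of which has the form $h(a,x)=\phi(a)\,x$ with $\phi(a)=f_q(\tfrac12 a)$ concave and strictly increasing, so $\phi'>0$ and $\phi''<0$. I would compute
\[
\nabla^2 h(a,x)=\begin{pmatrix}\phi''(a)\,x & \phi'(a)\\ \phi'(a) & 0\end{pmatrix},\qquad \det\nabla^2 h(a,x)=-\,\phi'(a)^2<0,
\]
so the Hessian is indefinite and $h$ is neither concave nor convex. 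Hence the objective is not concave on any set with nonempty interior, which rules out a concave maximization under the relaxation as well.

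The step I expect to be the real obstacle is making the non-convexity argument \emph{rigorous} in the presence of the sign and box constraints, because bilinear coupling $a_{s_i\rightarrow r_j}^{\text{SRR}}x_{s_i\rightarrow r_j}^{\text{SRR}}$ does not automatically entail non-convexity once variables are nonnegative: on the nonnegative orthant the superlevel set $\{ax\ge c\}$ is in fact convex (the region above a hyperbola), so the minimum-rate constraint (\ref{eq:sss-a-add}) is \emph{not} a valid witness. The safe anchor is therefore the objective Hessian above; alternatively one may use the capacity constraint (\ref{eq:sss-relay-capacity}), whose sublevel set is non-convex whenever the bound is active, but only with witness points of a single term that trade a large rate against a small fraction versus a moderate rate at full fraction, so that their midpoint overshoots $\mathcal{A}_{r_j\rightarrow D}^{\text{RDR}}$. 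Combining the set argument with the objective argument establishes that the formulation is non-convex in both its Boolean form and its continuous relaxation.
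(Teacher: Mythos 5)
Your proposal is correct and its core step is exactly the paper's: relax $x_{s_{i}\rightarrow r_{j}}^{\text{SRR}}$ to $[0,1]$, isolate one bilinear summand $f_{q}(a)\,x$, and show its Hessian is indefinite---the paper exhibits the two eigenvalues of opposite sign, while you equivalently note the determinant $-\phi'(a)^{2}<0$. Your supplementary observations (integrality of (\ref{eq:sss-x-01}) makes the feasible set trivially non-convex, and constraint (\ref{eq:sss-a-add}) is \emph{not} a valid witness since $\{ax\geq c\}$ is convex on the nonnegative orthant) are sound additions but do not change the route.
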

\begin{proof}
This proof considers the setting of one-source and one-relay. Then the objective function becomes

\begin{eqnarray}
f\left(a_{s_{i}\rightarrow r_{j}}^{\text{SRR}}, x_{s_{i}\rightarrow r_{j}}^{\text{SRR}}\right)
&\triangleq&f_{q}\left(a_{s_{i}\rightarrow r_{j}}^{\text{SRR}}\right)x_{s_{i}\rightarrow r_{j}}^{\text{SRR}}\\
&=&\mathcal{K}\log_{\beta}\left(a_{s_{i}\rightarrow r_{j}}^{\text{SRR}}+1\right)x_{s_{i}\rightarrow r_{j}}^{\text{SRR}}
\end{eqnarray}
where $\mathcal{K} = \frac{1}{\log_{\beta}(a_{\text{max}}+1)}$ is constant and $x_{s_{i}\rightarrow r_{j}}^{\text{SRR}}$ is relaxed, i.e., $0\leq x_{s_{i}\rightarrow r_{j}}^{\text{SRR}}\leq 1$.
To show that this is non-convex, the second-order Hessian of this should not be positive definite~\cite{boydbook}.
The Hessian $\nabla^{2}f\left(a_{s_{i}\rightarrow r_{j}}^{\text{SRR}}, x_{s_{i}\rightarrow r_{j}}^{\text{SRR}}\right)$ is:
\begin{equation}
\begin{bmatrix}
0 & \frac{\mathcal{K}/\ln \beta}{a_{s_{i}\rightarrow r_{j}}^{\text{SRR}}+1}\\
\frac{\mathcal{K}/\ln \beta}{a_{s_{i}\rightarrow r_{j}}^{\text{SRR}}+1} & -x_{s_{i}\rightarrow r_{j}}^{\text{SRR}}\cdot\frac{\mathcal{K}/\ln \beta}{\left(a_{s_{i}\rightarrow r_{j}}^{\text{SRR}}+1\right)^{2}}
\end{bmatrix}
\end{equation}
and then the corresponding two eigenvalues are
\begin{equation}
\frac{\mathcal{I}}{2} \pm \frac{1}{2}\sqrt{
		\mathcal{I}^{2} + \left(\frac{2\mathcal{K}/\ln \beta}{a_{s_{i}\rightarrow r_{j}}^{\text{SRR}}+1}\right)^{2}
		}
\end{equation}
where $\mathcal{I}=\frac{-\frac{\mathcal{K}}{\ln \beta}\cdot x_{s_{i}\rightarrow r_{j}}^{\text{SRR}}}{\left(a_{s_{i}\rightarrow r_{j}}^{\text{SRR}}+1\right)^{2}}$, $0\leq a_{s_{i}\rightarrow r_{j}}^{\text{SRR}}\leq 1.5$, $0 \leq x_{s_{i}\rightarrow r_{j}}^{\text{SRR}} \leq 1$.

These are not all positive, thus Hessian is not positive definite, which proves the formulation is non-convex.
\end{proof}

For non-convex MINLP, heuristic searches can find approximate solutions but cannot guarantee optimality~\cite{boydbook}.
With the following Theorem, our non-convex MINLP can be re-formulated as a convex program form.

\begin{thm}
For the given formulation, (\ref{eq:sss-objfunc})-(\ref{eq:sss-x-01}), introducing
\begin{equation}
a_{s_{i}\rightarrow r_{j}}^{\text{SRR}} \leq \mathcal{A}_{s_{i}\rightarrow r_{j}}^{\text{SRR}}\cdot x_{s_{i}\rightarrow r_{j}}^{\text{SRR}}, \forall i, \forall j\label{eq:sas-milp-beauty}
\end{equation}
instead of (\ref{eq:sss-a-leq}) makes the formulation convex.
\end{thm}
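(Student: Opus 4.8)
The plan is to show that constraint (\ref{eq:sas-milp-beauty}) collapses every bilinear product into a linear term, so that what remains is a concave objective over a polyhedron. The products occur in exactly three places: the objective (\ref{eq:sss-objfunc}), the relay-capacity bound (\ref{eq:sss-relay-capacity}), and the minimum-rate bound (\ref{eq:sss-a-add}); these are the sole sources of nonconvexity, as Theorem~1 already diagnosed for the objective.

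First I would extract the decisive consequence of the new bound. A data rate is nonnegative, $a_{s_i\rightarrow r_j}^{\text{SRR}}\geq 0$, while (\ref{eq:sas-milp-beauty}) reads $a_{s_i\rightarrow r_j}^{\text{SRR}}\leq\mathcal{A}_{s_i\rightarrow r_j}^{\text{SRR}}x_{s_i\rightarrow r_j}^{\text{SRR}}$ with $\mathcal{A}_{s_i\rightarrow r_j}^{\text{SRR}}>0$. For Boolean $x_{s_i\rightarrow r_j}^{\text{SRR}}\in\{0,1\}$ this squeezes $a_{s_i\rightarrow r_j}^{\text{SRR}}$ to $0$ when $x_{s_i\rightarrow r_j}^{\text{SRR}}=0$ and reduces to the old cap $a_{s_i\rightarrow r_j}^{\text{SRR}}\leq\mathcal{A}_{s_i\rightarrow r_j}^{\text{SRR}}$ when $x_{s_i\rightarrow r_j}^{\text{SRR}}=1$, so at every feasible point $a_{s_i\rightarrow r_j}^{\text{SRR}}x_{s_i\rightarrow r_j}^{\text{SRR}}=a_{s_i\rightarrow r_j}^{\text{SRR}}$. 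This is precisely the coupling missing from (\ref{eq:sss-a-leq}), under which a stream could carry positive rate with $x_{s_i\rightarrow r_j}^{\text{SRR}}=0$, the extra degree of freedom responsible for the indefinite Hessian of Theorem~1.

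With this identity I would simplify the three product-bearing expressions. For the capacity and rate constraints the substitution is immediate: (\ref{eq:sss-relay-capacity}) becomes $\sum_i a_{s_i\rightarrow r_j}^{\text{SRR}}\leq\mathcal{A}_{r_j\rightarrow D}^{\text{RDR}}$ and (\ref{eq:sss-a-add}) becomes $\underline{a}_{s_i}\leq\sum_j a_{s_i\rightarrow r_j}^{\text{SRR}}$. For the objective I would additionally use that $f_q$ vanishes at the origin, $f_q(0)=\mathcal{K}\log_\beta 1=0$: at $x_{s_i\rightarrow r_j}^{\text{SRR}}=0$ the rate is forced to $0$, so both $f_q(\tfrac{1}{2}a_{s_i\rightarrow r_j}^{\text{SRR}})x_{s_i\rightarrow r_j}^{\text{SRR}}$ and $f_q(\tfrac{1}{2}a_{s_i\rightarrow r_j}^{\text{SRR}})$ are $0$, while at $x_{s_i\rightarrow r_j}^{\text{SRR}}=1$ they agree; hence the objective reduces to $\sum_j\sum_i f_q(\tfrac{1}{2}a_{s_i\rightarrow r_j}^{\text{SRR}})$. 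Relaxing (\ref{eq:sss-x-01}) to $0\leq x_{s_i\rightarrow r_j}^{\text{SRR}}\leq 1$, the feasible set is now cut out solely by the affine inequalities (\ref{eq:sss-relay-capacity})--(\ref{eq:sss-source-flow-const}), the rewritten (\ref{eq:sss-a-add}), the new bound (\ref{eq:sas-milp-beauty}), and the box, i.e., a convex polytope. Each summand $f_q(\tfrac{1}{2}a_{s_i\rightarrow r_j}^{\text{SRR}})=\mathcal{K}\log_\beta(\tfrac{1}{2}a_{s_i\rightarrow r_j}^{\text{SRR}}+1)$ with $\mathcal{K}>0,\ \beta>1$ is a positive multiple of $\ln(\cdot)$ composed with an affine map, hence concave; a finite sum of concave functions is concave, and maximizing a concave function over a convex polytope is a convex program.

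I expect the genuine difficulty to be logical rather than computational: the identity $a_{s_i\rightarrow r_j}^{\text{SRR}}x_{s_i\rightarrow r_j}^{\text{SRR}}=a_{s_i\rightarrow r_j}^{\text{SRR}}$ is exact only at Boolean $x$, whereas convexity is asserted for the relaxed problem with continuous $x$. The clean route is to \emph{define} the reformulation as the product-free problem above, whose relaxation is convex by the previous paragraph, and then to verify it is equivalent to the integer program by exhibiting an integral optimum of the relaxation. Since $f_q$ is nonnegative and increasing and raising any $x_{s_i\rightarrow r_j}^{\text{SRR}}$ toward $1$ only loosens (\ref{eq:sas-milp-beauty}), the objective is never hurt by pushing the $x$ variables to extreme points; the only place a fractional $x_{s_i\rightarrow r_j}^{\text{SRR}}$ could be pinned is the pair of flow-count constraints (\ref{eq:sss-relay-flow-const})--(\ref{eq:sss-source-flow-const}). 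Establishing integrality there via a vertex or exchange argument on the coupled $(a,x)$ polytope is the step I would treat most carefully, since the coupling through (\ref{eq:sas-milp-beauty}) blocks a naive appeal to total unimodularity.
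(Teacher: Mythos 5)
Your first three paragraphs are, in substance, exactly the paper's proof, carried out with more care than the paper itself: the case split on $x_{s_{i}\rightarrow r_{j}}^{\text{SRR}}\in\{0,1\}$ showing that (\ref{eq:sas-milp-beauty}) forces $a_{s_{i}\rightarrow r_{j}}^{\text{SRR}}=0$ on disconnected links and reduces to (\ref{eq:sss-a-leq}) on connected ones, the resulting product-free rewrites of the objective and of (\ref{eq:sss-relay-capacity}) and (\ref{eq:sss-a-add}) (the paper's (\ref{eq:objfunc-sas-simple-milp}), (\ref{eq:sas-relay-capacity-simple-milp}), (\ref{eq:sss-a-add-up})), and the concavity of $\mathcal{K}\log_{\beta}\left(\frac{1}{2}a+1\right)$ as a composition of a concave increasing function with an affine map over a polyhedral feasible set. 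Your explicit appeal to $f_{q}(0)=0$ to justify dropping the factor $x_{s_{i}\rightarrow r_{j}}^{\text{SRR}}$ from the objective is a detail the paper leaves implicit, and it is genuinely needed: were $f_{q}(0)>0$, the rewritten objective (\ref{eq:objfunc-sas-simple-milp}) would overcount disconnected links.

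Your final paragraph, however, misdiagnoses what remains to be shown, and the step you propose there would fail. The theorem claims only that the reformulation is a \emph{convex MINLP}: all constraints affine, the maximization objective concave, with the integrality constraint (\ref{eq:sss-x-01}) retained as a side condition. Equivalence with the original formulation therefore needs to hold only at Boolean $x$ --- which your case analysis already establishes --- while convexity is a property of the functions of the new, product-free program once integrality is relaxed; there is no circularity, and the identity $a_{s_{i}\rightarrow r_{j}}^{\text{SRR}}x_{s_{i}\rightarrow r_{j}}^{\text{SRR}}=a_{s_{i}\rightarrow r_{j}}^{\text{SRR}}$ is never needed at fractional $x$. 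In particular, the paper's claim of guaranteed optimality rests on branch-and-bound in which every node is a convex program, not on exactness of the continuous relaxation --- and the relaxation is in fact \emph{not} exact, so the integral-optimum vertex/exchange argument you flag as the ``genuine difficulty'' cannot succeed in general. A small counterexample: one relay $r$ with $B_{r}=1$ and two sources, each of which can meet its minimum rate $\underline{a}_{s_{i}}>0$ only via $r$; the fractional point $x_{s_{1}\rightarrow r}^{\text{SRR}}=x_{s_{2}\rightarrow r}^{\text{SRR}}=\frac{1}{2}$ with each rate capped at $\frac{1}{2}\mathcal{A}_{s_{i}\rightarrow r}^{\text{SRR}}$ can satisfy all relaxed constraints including (\ref{eq:sss-a-add-up}), while no integral $x$ satisfies (\ref{eq:sss-relay-flow-const}) together with both minimum-rate constraints. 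So the lemma you reserve for careful treatment is false as stated; fortunately the theorem does not require it.
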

\begin{proof}
For the non-convex MINLP formulation in Section~\ref{sec:jscr}, $x_{s_{i}\rightarrow r_{j}}^{\text{SRR}}=0$ means the link is disconnected. Thus the corresponding rate becomes $0$ and (\ref{eq:sas-milp-beauty}) leads to the same result when $x_{s_{i}\rightarrow r_{j}}^{\text{SRR}}=0$, i.e.,
\begin{equation}
a_{s_{i}\rightarrow r_{j}}^{\text{SRR}} \leq \mathcal{A}_{s_{i}\rightarrow r_{j}}^{\text{SRR}}\cdot 0 = 0, \forall i, \forall j.
\end{equation}
Otherwise, if $x_{s_{i}\rightarrow r_{j}}^{\text{SRR}}=1$, then this term is equivalent to (\ref{eq:sss-a-leq}).
Therefore, in turn, (\ref{eq:sss-objfunc}) is also updated as
\begin{equation}
\max \sum_{j=1}^{|\mathcal{R}|}\sum_{i=1}^{|\mathcal{S}|}f_{q}\left(\frac{1}{2}a_{s_{i}\rightarrow r_{j}}^{\text{SRR}}\right)\label{eq:objfunc-sas-simple-milp}
\end{equation}
and (\ref{eq:sss-relay-capacity}) and (\ref{eq:sss-a-add}) are also updated as following (\ref{eq:sas-relay-capacity-simple-milp}) and (\ref{eq:sss-a-add-up}):
\begin{eqnarray}
\sum_{i=1}^{|\mathcal{S}|}a_{s_{i}\rightarrow r_{j}}^{\text{SRR}} &\leq& \mathcal{A}_{r_{j}\rightarrow D}^{\text{RDR}}, \forall j.\label{eq:sas-relay-capacity-simple-milp}\\
\underline{a}_{s_{i}} &\leq& \sum_{j=1}^{|\mathcal{R}|} a_{s_{i}\rightarrow r_{j}}^{\text{SRR}}, \forall i,\label{eq:sss-a-add-up}
\end{eqnarray}
Then now there are no non-convex terms in the program.
\end{proof}

Finally, our convex MINLP, which can guarantee optimal solutions, is as follows: (\ref{eq:objfunc-sas-simple-milp}) subject to
(\ref{eq:sas-relay-capacity-simple-milp}),
(\ref{eq:sss-relay-flow-const}),
(\ref{eq:sss-source-flow-const}),
(\ref{eq:sas-milp-beauty}),
(\ref{eq:sss-a-add-up}),
(\ref{eq:sss-x-01})
where $\forall i\in\{1,\cdots,|\mathcal{S}|\}, \forall j \in \{1,\cdots,|\mathcal{R}|\}$.

\section{Performance Evaluation}\label{sec:perfeval}

To verify the performance of our scheme, i.e., \textit{video quality maximization} (named as \textsf{VQM}), we compare it with the following schemes:
\begin{itemize}
\item The joint video coding and relaying under the consideration of {\it sum rate maximization} (named as \textsf{SRM}). In this case, the proposed objective function (\ref{eq:objfunc-sas-simple-milp}) is:
\begin{equation}
\max \sum_{j=1}^{|\mathcal{R}|}\sum_{i=1}^{|\mathcal{S}|}\frac{1}{2}a_{s_{i}\rightarrow r_{j}}^{\text{SRC}}
\end{equation}
due to the fact that the quality is no longer considered.
\item The scheme in~\cite{jsac12thou}, which is an efficient algorithm that considers joint rate selection and routing (named as \textsf{JRSR}) in terms of sum-rate maximization with cooperative communication mode selection and no multiple-beams.
For fair comparisons, we adapt the scheme to our outdoor-stadium architecture (one-tier relay) and allow only decode-and-forward relaying.
\end{itemize}

\begin{table}[tp]%
\caption{Expectation of Achieved Normalized Aggregated Video Quality}
\label{expectation}
\centering %
\begin{tabular}{r|r|r||r|r|r}
\hline
\multicolumn{3}{c||}{}&\multicolumn{3}{c}{Multiple-Beams at $s_{i}$ and $r_{j}$} \\
\hline
$|\mathcal{S}|$ & $|\mathcal{R}|$ & Setting & \textsf{VQM} & \textsf{SRM} & \textsf{JRSR}\\
\hline
\hline
5 & 10 &   I & 4.166 & 3.873 & 3.331 \\
5 & 10 &  II & 4.934 & 4.647 & 4.165 \\
5 & 10 & III & 4.681 & 4.397 & 3.632 \\
\hline
10 & 5 &   I & 4.164 & 3.871 & 3.352 \\
10 & 5 &  II & 4.954 & 4.620 & 4.182 \\
10 & 5 & III & 4.664 & 4.371 & 3.650 \\
\hline
10 & 10 &   I & 8.813 & 8.451 & 5.483 \\
10 & 10 &  II & 9.817 & 9.452 & 6.336 \\
10 & 10 & III & 9.312 & 8.958 & 5.795 \\
\hline
10 & 15 &   I & 8.883 & 8.574 & 5.633 \\
10 & 15 &  II & 9.872 & 9.563 & 6.456 \\
10 & 15 & III & 9.383 & 9.074 & 5.927 \\
\hline
15 & 10 &   I & 13.420 & 11.765 & 6.483 \\
15 & 10 &  II & 14.902 & 13.255 & 7.376 \\
15 & 10 & III & 14.403 & 12.755 & 6.839 \\
\hline
\end{tabular}
\end{table}

For the setting, the cameras are uniformly distributed on top of the stadium. Between stadium and broadcasting center, multiple relays are uniformly deployed along a line. To vary the settings, we consider this line to be near the cameras (Setting I), in the middle between cameras and broadcasting center (Setting II), and near the center (Setting III).
As our performance measure, we consider the cumulative probability distribution (cdf) of the aggregate video quality.
The cdf is obtained as follows: we consider multiple realizations of the deployment of sources and relays, i.e., the random deployment of relays with Setting I, Setting II, and Setting III.
For each such realization, we optimize coding rates and relay selection; thus each run gives us one realization of the aggregate video quality. We finally plot the cdf of this quality. For the simulation of \textsf{VQM}, the lower bounds ($\underline{a}_{s_{i}}$, $\forall i\in\{1,\cdots,|\mathcal{S}|\}$) of each source are set as $0.75$ Gbit/s ($50\%$ of $1.5$ Gbit/s).
More detailed scenarios and simulation results can be found in~\cite{kim12tbc}.

\subsection{CDF of Aggregate Video Quality}\label{sec:sim01}

\begin{figure}[t!]
	\begin{center}
		\includegraphics[scale=0.5]{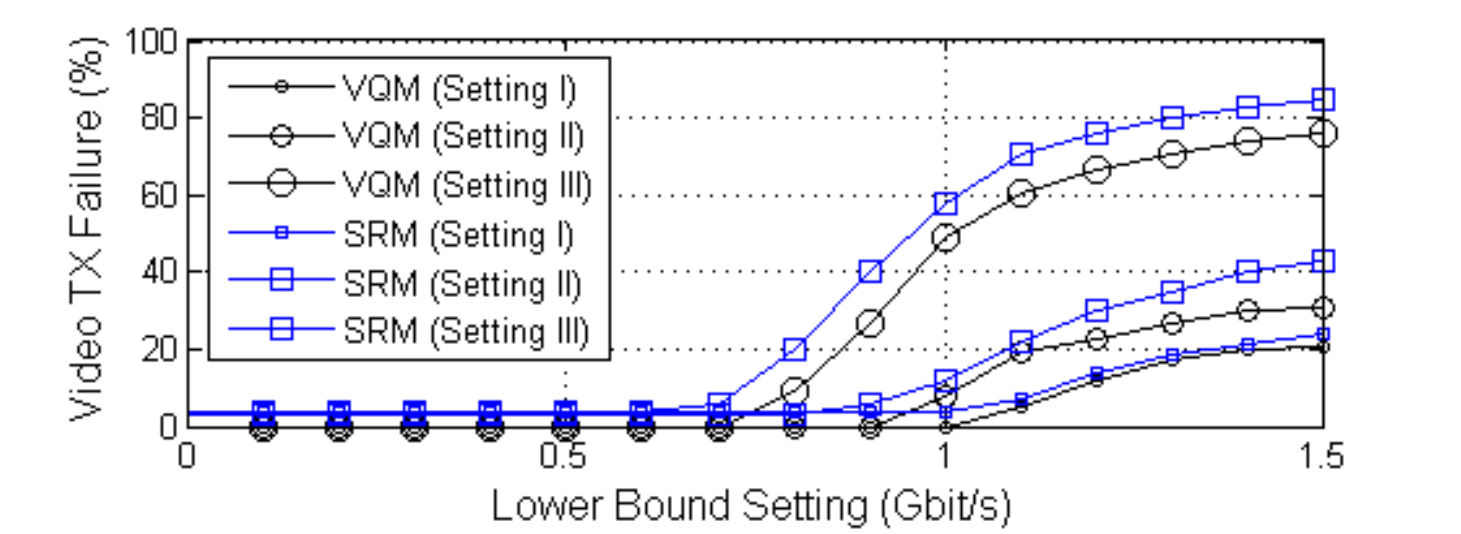}
	\end{center}
	\caption{Impact of Various Lower Bound Setting: $|\mathcal{S}|=10, |\mathcal{R}|=15$}
	\label{fig:lower}
\end{figure}	

\begin{figure*}[t!]
	\centering
	\subfigure[Setting I]{
		\includegraphics[scale=0.38]{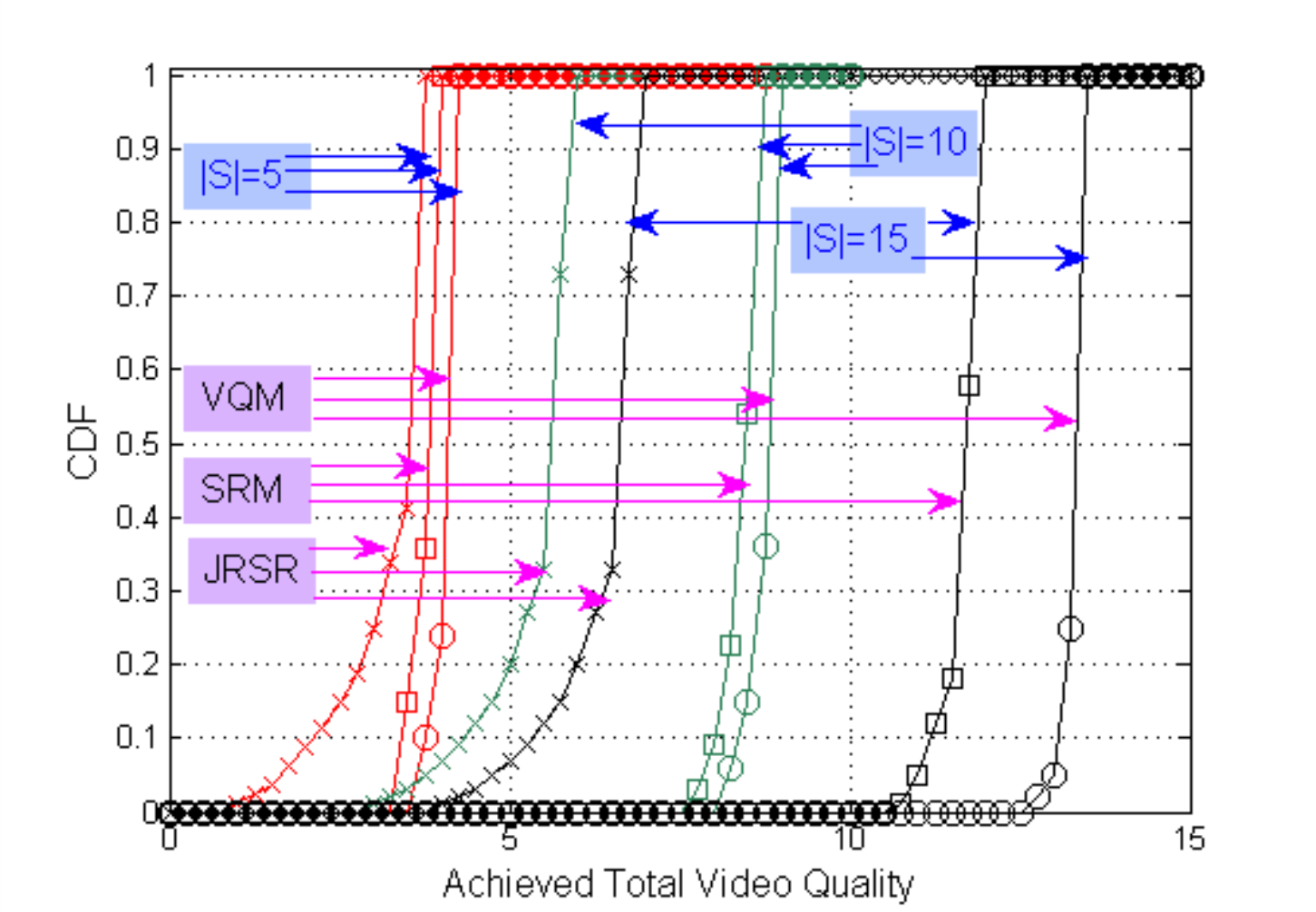}
		\label{fig:cdf_msmr_r_a}
	}
	\subfigure[Setting II]{
		\includegraphics[scale=0.38]{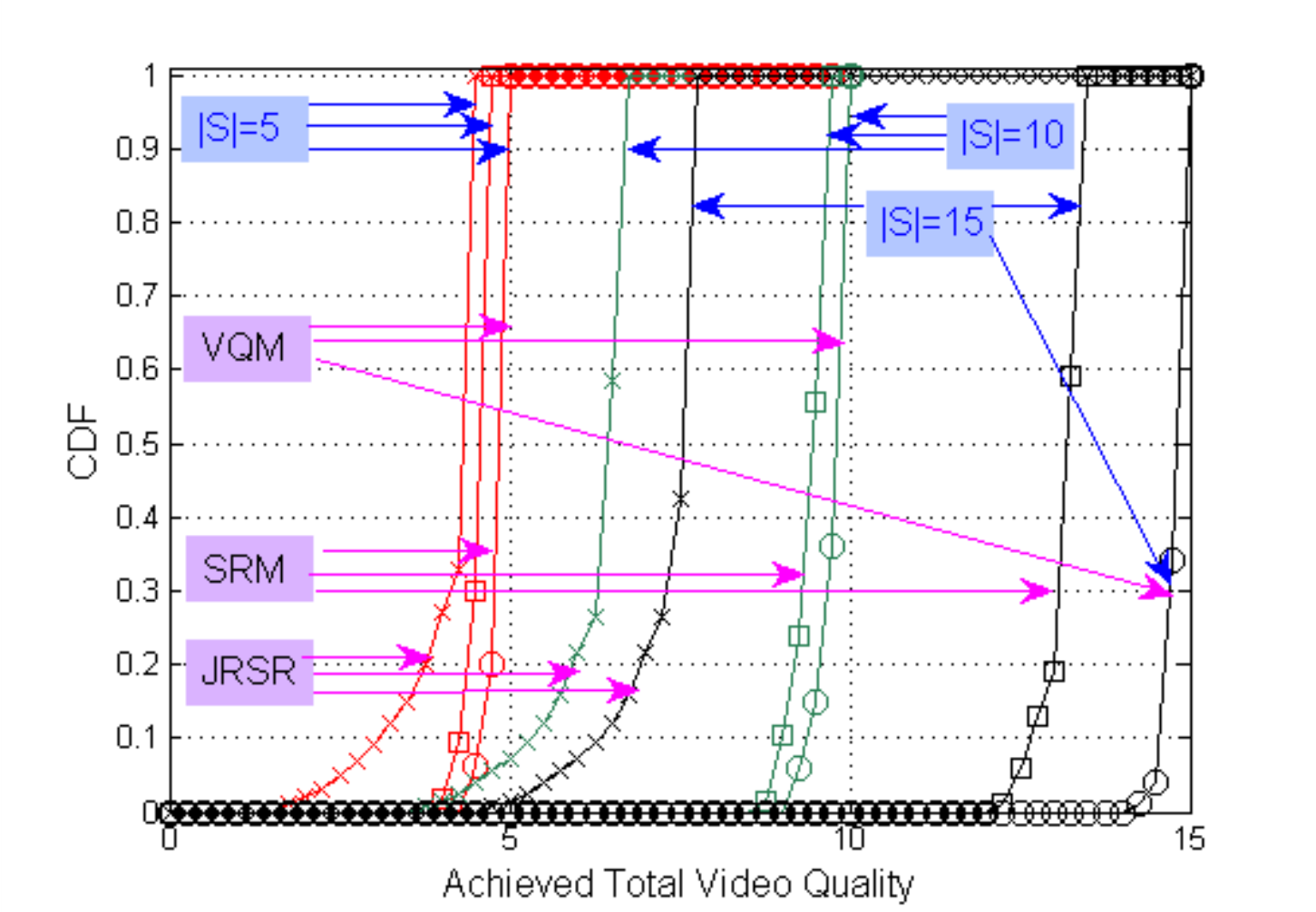}
		\label{fig:cdf_msmr_r_b}
	}
	\subfigure[Setting III]{
		\includegraphics[scale=0.38]{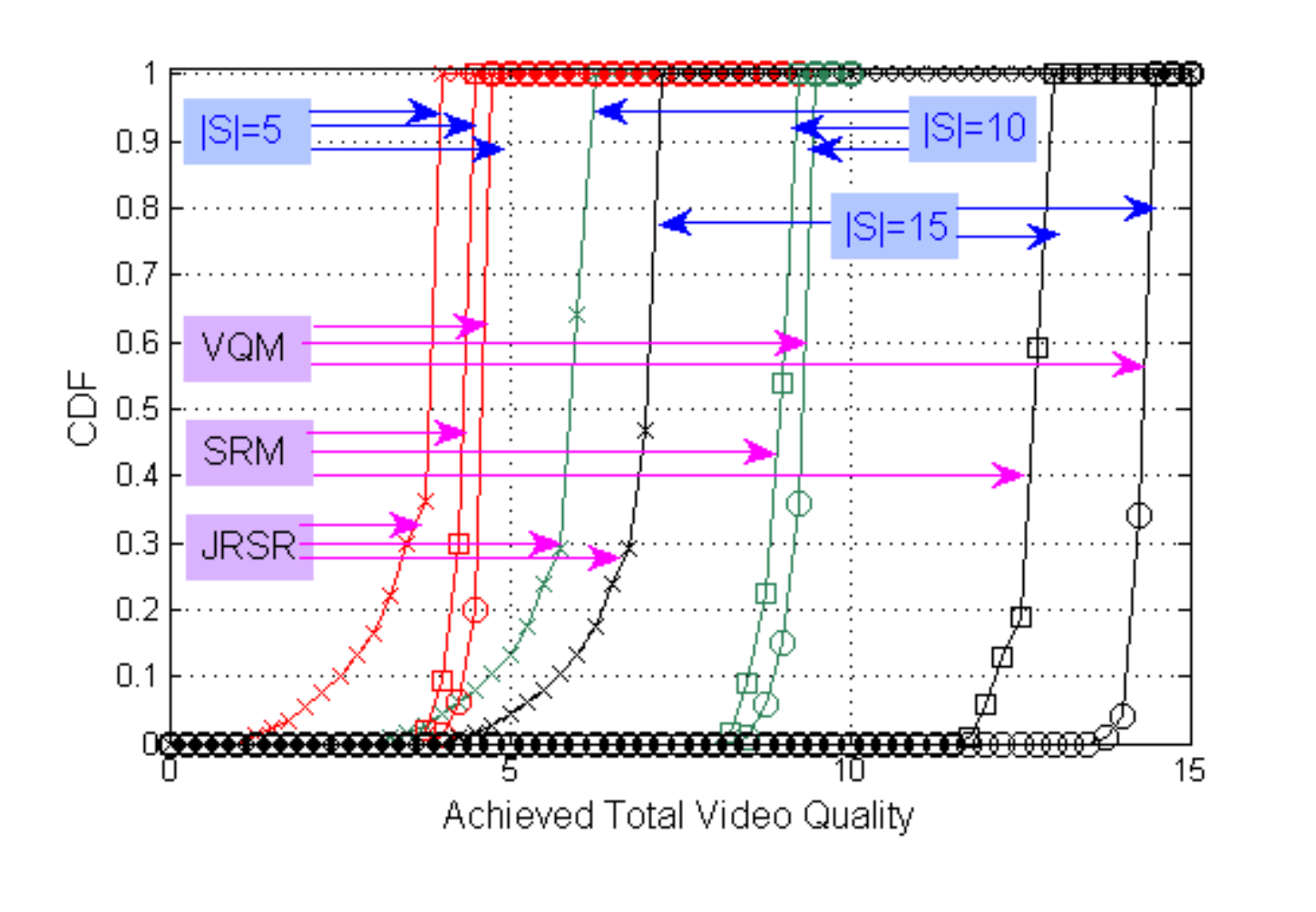}
		\label{fig:cdf_msmr_r_c}
	}
	\caption{Simulation Results: Number of Sources ($|\mathcal{S}|=5, 10, 15$) and Fixed Number of Relays ($|\mathcal{R}|=10$)}
	\label{fig:cdf_msmr_r}
\end{figure*}

Fig.~\ref{fig:cdf_msmr_r} plots the cases that the number of sources is smaller, equal, or larger than the number of relays (i.e., $|\mathcal{S}|=5, 10, 15$, and $|\mathcal{R}|=10$).
The mean achieved normalized aggregated video qualities are in Table~\ref{expectation}.
In Table~\ref{expectation}, the video quality values from each source are normalized as $1$ for performance evaluation.
Thus, if we have $N$ cameras in the system, the maximum achievable aggregated video quality is $N$.
As shown in this result, the performance of \textsf{JRSR} is worse than that of both \textsf{SRM} and \textsf{VQM}.
The latter (i.e., \textsf{JRSR}), by design, does not allow the exploitation of the multiple-beam antennas at relays, and thus shows worse performance.
The performance gains of \textsf{SRM} are more pronounced than \textsf{JRSR} in Settings I and III, i.e., when the relays are close to either the sources or the destination. More importantly, we find that the relative performance advantage drastically increases as the number of sources increases relative to the number of relays. This is not surprising, as for these situations the ability of the sources to split their streams and flexibly route them via the relays becomes more important.
We also see that \textsf{SRM} shows lower performance than \textsf{VQM} due to the fact that \textsf{SRM} aims to the maximization of sum data rates, while \textsf{VQM} aims to maximize the overall delivered video quality.
The relative advantage of \textsf{VQM} also increases as the number of sources increases. Again, this is not surprising, as the bandwidth limitations become more stringent as the number of sources increases.


\subsection{Impact of Lower Bound Setting}\label{sec:sim04}
In previous simulation, the lower bounds for the data rate per data stream are set as $0.75$ Gbit/s. Here, we vary this value from $0$ Gbit/s (no lower bound) to $1.5$ Gbit/s (allowing only uncompressed video) in steps of $0.1$ Gbit/s. As a performance quality measure, we define ``stream outage" (i.e., the probability that at least one stream does not have the minimum required quality).
As shown in Fig.~\ref{fig:lower}, Setting III suffers significantly from the higher required per-stream quality.
With Setting III, the data rates between sources and relays are lower than the others.
Thus, when we set the lower bound quite high, all flows are disconnected. Thus, it achieves the lowest performance.
On the other hand, in Setting I, all flows between sources and relays have enough capacity to support uncompressed video transmission, thus, a higher setting for minimum quality does not have a strong impact.
Fig.~\ref{fig:lower} also shows that \textsf{VQM} has better performance than \textsf{SRM} for all settings.
\section{Conclusion}\label{sec:conclusions}
This paper suggests and discusses quality-aware coding and routing for 60\,GHz multi-Gbit/s real-time video streaming in an outdoor broadcasting system.
In the system, there are multiple wireless video cameras distributed throughout the stadium.
We presented an optimization framework for finding the combination of wireless link pairs between wireless cameras and relays that can maximize the overall or per-flow qualities of delivered video to a broadcasting center.
An initial non-convex MINLP is re-formulated as a convex program, which allows optimum solutions. Simulations show that this methodology outperforms other methods that do not take the peculiarities of millimeter-wave video links into account.


\end{document}